\DeclareMathAlphabet{\mathcal}{OMS}{cmsy}{m}{n}
\newcommand{\cD}{{\cal D}}
\newcommand{\cE}{{\cal E}}
\newcommand{\cF}{{\cal F}}
\newcommand{\cG}{{\cal G}}
\newcommand{\cH}{{\cal H}}
\newcommand{\cL}{{\cal L}}
\newcommand{\cM}{{\cal M}}
\newcommand{\cN}{{\cal N}}
\newcommand{\cP}{{\cal P}}
\newcommand{\cS}{{\cal S}}
\newcommand{\cT}{{\cal T}}
\newcommand{\cV}{{\cal V}}
\newcommand{\cW}{{\cal W}}
\newcommand*\samethanks[1][\value{footnote}]{\footnotemark[#1]}
\newcommand{\comps}[1]{\mbox{comps}\left( #1 \right)}
\begin{document}

\title{Scalable Edge Clustering of Dynamic Graphs via Weighted Line Graphs}

\shorttitle{Dynamic Graph Edge Clustering} 
\shortauthorlist{Ostroski, et al.} 

\author{
\name{Michael Ostroski}
\address{National Security Agency\\
Fort Meade, MD 20755\\
maostro@uwe.nsa.gov
}
\and
\name{Geoffrey Sanders\thanks{
LLNL-JRNL-857155}, 
Trevor Steil\samethanks, Roger Pearce\samethanks}
\address{Lawrence Livermore National Laboratory (LLNL) \\
Livermore, CA, USA \\
\{sanders29, steil1, pearce7\}@llnl.gov
}
}

\maketitle

\begin{abstract}
{ 
\qquad Timestamped relational datasets consisting of records (or connections) between
pairs of entities are ubiquitous in data and network science.
 For applications like peer-to-peer communication, email, various social network 
interactions, and computer network security, it makes sense to organize these 
records into groups based on how and when they are occurring. Weighted line graphs offer a natural 
way to model how records are related in such datasets but 
for large real-world graph topologies the complexity of building and 
utilizing the line graph is prohibitive. We present an
algorithm to cluster the edges of a dynamic graph via the associated line graph without forming it explicitly.

\qquad We outline a novel hierarchical dynamic graph edge clustering approach 
that efficiently breaks massive relational datasets into small sets of edges  
containing events at various timescales. This is in stark contrast to traditional 
graph clustering algorithms that prioritize highly connected (clique-like) community structures.
Our approach relies on constructing a sufficient subgraph of a weighted line graph and 
applying a hierarchical agglomerative clustering.  
This approach is related to scalable techniques from spatial clustering \cite{ester1996dbscan}, 
nonlinear-dimension reduction \cite{mcinnes2020umap}, and topological data analysis \cite{carlsson2021tda} and draws particular inspiration
from HDBSCAN \cite{campello2013hdbscan}.

\qquad We present a parallel algorithm and show that it is able to break 
billion-scale dynamic graphs into small sets that correlate in topology and time.
The entire clustering process for a graph with $O(10 \text{ billion})$ edges 
takes just a few minutes of run time on 256 nodes of a distributed compute environment.   
We argue how the output of the edge clustering is useful for a multitude of data 
visualization and powerful machine learning tasks, both involving the original massive dynamic 
graph data and/or the non-relational metadata.  
Finally, we demonstrate its use on a real-world large-scale directed dynamic graph 
and describe how it can be extended to dynamic hypergraphs and dynamic graphs/hypergraphs with 
unstructured data living on vertices and edges.    
}
{Dynamic Graph Clustering, Line Graphs, Data Analysis, High Performance Computing}

MSC2020: 05C90, 68R10, 68W10, 68W15, 68W40, 68T09 

\end{abstract}

\newpage
 
\section{Introduction}

\qquad Dynamic graphs are generally represented in models with vertices that exist throughout time 
and edges that are transient, only existing at a single point in time (or a relatively short 
period of time). Here we assume a collection of temporal edges of the form $e=(i,j,t)$, 
a source/target vertex pair $i,j,$ with a single timestamp of observation, $t$. 
The same source/target pair often participates in multiple edges at different points in time.

Clustering temporal relational datasets is an important and challenging unsupervised 
data analysis task and ideally serves as initial processing for massive 
datasets allowing a broad set of follow-on analysis. An analyst may want to index data such that
 related data can be quickly found or in order to break up data into more manageable chunks. 
 Other examples include aggregating data and using associated metadata of the 
entire cluster to compare each cluster 
with respect to the others and detection of topological or temporal anomalies .

In this work, we will focus on clustering the edges of a dynamic graph calling these clusters 
{\em conversations}, from the natural language 
description of a group of messages between people. It is worth noting that
an edge clustering implies an overlapping node clustering since each node in a network can be a part of
several conversations and each conversation implies a group of members.
The structure of the conversations we find range from fairly sparse, that is path- or 
tree-like, to dense and clique-like, but are generally characterized by relatively 
tight occurrence in time (see Figure~\ref{fig:ex1} for a depiction).

\begin{figure}[h]
\centering
\includegraphics[width=4 in,draft=flase]{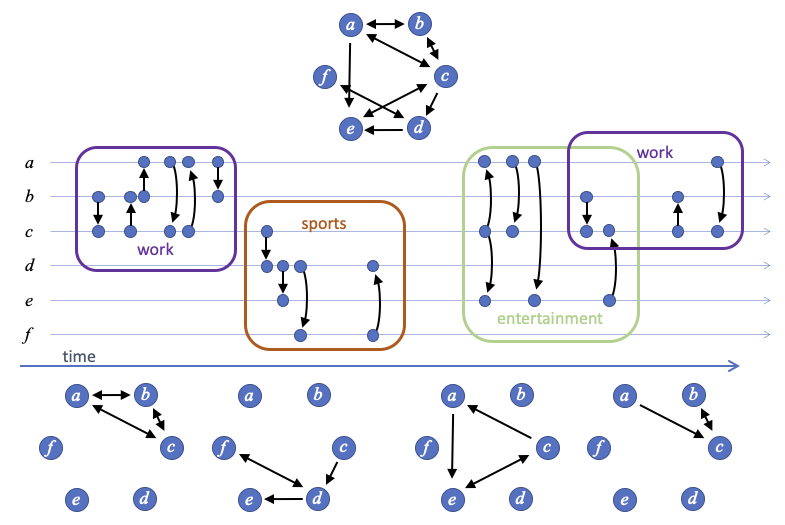}
\caption{ {\bf Example of intermittent community interaction from the perspective of a single vertex.}
\\\hspace{\textwidth}
(Top) Original dynamic multigraph. Each edge potentially represents multiple records occurring at different times.
\\\hspace{\textwidth}
(Middle) Time series of timestamped peer-to-peer interactions.
User $c$ receives a message from a co-worker, $b$, 
which causes $c$ to log on and start interacting with a few different groups of users 
for a small period of time 
before logging off.   First, $c$ reacts to the initial message from $b$, participating in a
work conversation between $a$, $b$, and $c$.   
After a few quick responses, 
experiencing a lull in the conversation, $c$ decides to send a message about a recent sporting event, 
and starts a conversation with $d$, $e$ and $f$.   Next, $c$ starts a conversation 
about a television show with $a$ and $e$.   Lastly, while the entertainment-related conversation is still active, 
$b$ reacts to the last work-related message and the 
work conversation in the beginning of this example is resumed.
\\\hspace{\textwidth}
(Bottom) Flattened snapshots of subgraph topologies associated with each conversational burst.  Note
that some are clique-like involving most possible member pairs, and others are tree-like, but 
all are characterized by relatively quick connected interactions.} 
\label{fig:ex1} 
\end{figure}

\subsection{Classical Community Evolution Detection}

There exist several broad classes of dynamic graph clustering approaches, 
most of which focus on detecting 
classical community structures (clique-like) and their dynamic evolution \cite{Rossetti2018}, 
including the following list. 
 
\begin{itemize} 
\item[(i)] {\bf Windowed Analysis.} One class of techniques employs snapshots, discretizing 
 time globally across the dataset and applying disjoint static graph analysis to the resulting 
graphs that consist of all edges occurring within the time window \cite{graphwindow}.
Some of these approaches factor in neighboring time windows in various ways to smooth and 
track cluster evolution \cite{overlap}. 
\item[(ii)] {\bf Tensor Analysis.} These techniques also discretize time globally and apply 
multi-linear tensor factor analysis to jointly find correlations in timing and topological 
structure \cite{graphtensor}.
\item[(iii)] {\bf Streaming Updates.} These approaches incrementally update the graph or a set of graph features  
by adding (and maybe removing) one or more
edges at a time and recording the changes. \cite{stinger} \cite{graphstream}

\end{itemize}

These approaches have been successful at providing 
insight into many types of dynamic relational datasets,
but they are not without limitations for understanding community structure and evolution.

Classes (i)-(ii) tend to use a global discretizations of time 
(a single time-scale for the entire dataset) in order to provide a coarse-grained view of the data. 
There are several timescales present in many real world datasets where some vertices 
are typically much more active than others. Moreover, there is an underlying trade-off where 
wide time intervals can be used but will tend to 
{\em wash out} coherent events that happen over short time-scales. Conversely, short time 
intervals leave little or even insufficient connectivity.  For most realistic datasets, 
choice of a global partition in time is likely to miss an important relationship 
{\em somewhere} in the data.   To complicate all of this further, many vertices 
often become more or less active at different periods of time, and a global discretization 
may not even be useful for any one vertex.

Streaming algorithms (Class (iii)) have been important for massive datasets that are 
coming in too fast to rebuild models and recompute clusters or features. These algorithms can provide a solution when
 clusters are required as soon as possible for situational 
awareness or a fast response to data trends.  There is a trade-off between 
timeliness and accuracy/complexity, where fast and simple answers are often more important than best answers.
The stability of streaming algorithms can also depend of the order that the edges are processed in streaming update algorithms \cite{comdetection}.

 Another primary challenge is that these techniques tend to focus on
vertex clustering, whereas edge clustering may be a preferable option for some dynamic graph applications.
As an example from network security, sysadmins are already aware of network structure 
and are instead interested in identifying certain types of anomalous events. These events could be 
a user acting in a way (maliciously or otherwise) that will negatively impact the network. 
In this case 
the network maintainers are not interested in a vertex clustering (the bad actor could have an 
overall normal user profile) 
and are instead interested in groups of records (conversations) that contain or imply 
these events.

Line graph analysis offers an attractive solution to all these issues. 
For a given dynamic multigraph $\cG$, the {\em line graph} models 
the relationships between pairs of edges in $\cG$ (see a formal definition in \S\ref{sxn:prelim}). 
Two nodes of a line graph 
are connected only if the corresponding edges in $\cG$ share a node.  
Further, we consider a line graph model that captures the directional and temporal properties of $\cG$
by employing a weighted line graph with edge weights scoring
the strength of connection between records.
We use weights that are a function of the time intervals between the two records to connect 
events that are close in time more strongly than those that happen further apart. 
In \S\ref{generalweights} we discuss a natural generalization to line graph edge weights that are based on other 
 features of the edges and nodes of $\cG$.

Traditionally, line graphs are too expensive to build from large graphs and this problem is only 
exacerbated when explicitly dealing with a multigraph.   For many modern applications
 (i.e. scale-free social media graphs) the 
storage and complexity of clustering line graphs directly is often quadratic in the input.    
Depending on the clustering technique, the computational cost may be even higher. 
Here we 
present a scalable approach that utilizes a sufficient sparse subgraph of the line graph 
and demonstrate its efficacy  and efficiency on dynamic graphs 
ranging from millions to tens of billions of edges.

\subsection{Preliminaries} 
\label{sxn:prelim} 
 
\subsubsection{Notation}

\hfill

Consider a {\em dynamic multigraph model}, where the graph $\cG(\cV, \cE, \cT)$ is a discrete set of 
vertices, $\cV$, and a discrete set of edges, $\cE \subset \cV \times \cV$ , 
that represent relationship events at 
specific times $t \in \cT$ with $\cT$ a continuous interval $[0, T]$.
Each dynamic edge $e_r \in \cE$ is a tuple $e_r = (i_r, j_r, t_r)$ representing a record 
between vertices $i_r, j_r \in \cV$ observed at time $t_r \in \cT$. 
In our discussion, we will assume that $\cG$ is {\em directed}, or that the record
$e_r = (i_r, j_r, t_r)$ represents $i_r$ sending to $j_r$ at time $t_r$. 
Note that $e_r = (i_r, j_r, t_r) \in \cE$ does not imply the existence of a {\em concurrent reciprocal} 
edge $e_r^* = (j_r , i_r, t_r)$, although concurrent reciprocal edges may exist.    
{\em Self loops}, $(i,i,t)$,  are also possible, but their presence and utility is data dependent and 
often require special treatment so we will assume that there are no self loops in $\cG$.

 $\cG$ is a {\em multigraph} which means that there may exist several edges between any two vertices.
In a dynamic multigraph this usually means that the edges exist at different times.
For example, $e_1=(i,j,t_1)$ and $e_2=(i,j,t_2)$, $t_1 < t_2$, represent two different events 
between $i$ and $j$ observed at two different times. 
Furthermore, {\em ties} in timestamps are allowed, that is $e_3=(i,j,t_3)$ and $e_4=(k,l,t_4)$ can 
exist such that $t_3=t_4$. This is common in very large real-world datasets, where ties happen 
due to record volume and discretization of time (e.g. rounding to the nearest second).   
For simplicity, we assume that different dynamic multigraph edges are unique, 
and that no two edges match in their ordered 
vertex pair {\em and} timestamp.

The set of incoming multigraph edges to vertex $i$ is 
$\cE^{(in)}_i := \{ e_r \in \cE \, : \, e_r=(k_r,i,t_r) \}$ and 
the set of outgoing edges is 
$\cE^{(out)}_i := \{ e_r \in \cE \, : \, e_r=(i,j_r,t_r) \}$. 
The {\em vertex in-degree} of $i \in \cV$, $d^{(in)}_i := |\cE^{(in)}_i|$, is
the number of multigraph edges with $i$ as a target.
Similarly, the  {\em vertex out-degree} is $d^{(out)}_i := |\cE^{(out)}_i|$. 
The set of all edges incident to vertex $i \in \cV$ is $\cE_i := \cE^{(in)}_i \cup \cE^{(out)}_i$
and the {\em vertex degree} of $i$ is $d_i := |\cE_i| = d^{(in)}_i + d^{(out)}_i$, 
the total number of multigraph edges incident to $i$. 
These quantities have maximal values across $\cV$,  
$d_{max} := max_{i \in \cV} \, d_i$.
The {\em vertex neighborhood} of $i \in \cV$, written $\cN_i$, is the set of all vertices with
one or more edges incident to $i$.  Due to the possibility of multiple edges involving 
the same pair of vertices, $|\cN_i| \leq d_i$ in general.

\begin{table}
\rowcolors{2}{gray!25}{white}
\centering
\begin{tabular}{|c|c|c|}
\hline
Symbol & Definition & Section \\
\hline
$\cG(\cV, \cE, \cT)$ & Dynamic multigraph, vertices, edges, time &  \\
$e = (i,j,t)$  & Edge from node $i$ to node $j$ at time $t$ & \\
$\cE_i$ & Edge incidence neighborhood & \S\ref{sxn:prelim} \\
$d_i, d_{max}$ & Vertex degree, maximum degree & \\
$\cN_i$ & Vertex Neighborhood & \\

\hline
$\cM_\cG = MST(\cG)$ & Minimum Spanning Tree of $\cG$ & \\
$\cL_\cG(\cE, \cF, \cW)$ & Increment-weighted Line Graph & Def \ref{def:lg} \\

$\cL_\cG(i)$ & Node Local Increment Weighted Line Graph & Def \ref{def:nliwlg} \\

\hline
$\cL^*_\cG$ & Line Graph Skeleton of $\cG$ & Alg \ref{alg:skeleton}\\

$\cM_\cG(i)$ & Node Local MST($\cL_\cG$) & \\

\hline
$\mbox{comps}(\cH, \omega)$ & $\omega$-weighted connected components of $\cH$ & \S\ref{sec:distributed} \\

\hline
\end{tabular}
\caption{Notation}
\label{tab:notation}
\end{table}

\newpage

\subsubsection{Increment-Weighted Line Graph}
\label{linegraphdef}

\begin{figure}[h]
  \centering
  \includegraphics[width=4.75 in,draft=false]{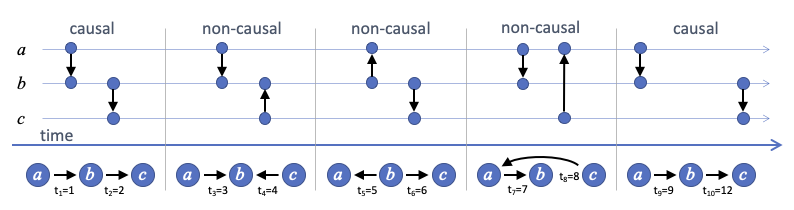}
  \caption{ {\bf Causality for pairs of incident timestamped edges. } 
  Increment-weighted line graph edges represent causal flow of information according to our 
  model, as described formally in Def.~\ref{def:lg}. 
  (Left) Tip-to-tail directed edges that are ordered correctly in time represent  
  multi-hop causal information flow. (Center-Left) In contrast, pairs of directed edges incident 
  to a sink vertex are not causal and (Center) similarly for a source vertex. 
  (Center-Right) Tip-to-tail in the wrong order is also not causal.   
  (Right) Larger gaps in time between causal edge pairs are represented by larger weights.
  } 
  \label{fig:causal} 
\end{figure} 

 For an undirected graph $\cG$, the line graph is formed by creating a vertex
 for each edge in $\cG$ and connecting those line graph vertices if the corresponding edges in $\cG$ share a node. 
 Note that this definition extends trivially to a multigraph. 
 Here, we define a directed variant of the line graph that heuristically follows the potential flow of
 information, where two line graph nodes are 
 connected only when the corresponding edges in $\cG$ share a forward relation, or are {\em tip-to-tail}  
 (see Figure~\ref{fig:causal}). 
 This can be viewed as enforcing a causal relation between nodes of the line graph, i.e. two nodes of 
 the line graph are only related if the corresponding records in $\cG$ were aligned such that one interaction 
 could have directly caused the other (based on evidence present within the dataset alone).

Finally, each edge of the line graph is weighted based on how {\em close} the associated records are.
Let the distance between two records be defined as the time interval between when they occurred (see a more 
general idea in \S \ref{generalweights}). 
This implies that two records (nodes of of the line graph) that happen closer in 
time are more strongly connected than 
pairs with a larger gap between them. We will, for simplicity, just use the difference in time between records
as a weight meaning that a smaller weight implies a stronger connection 
(see Figure~\ref{fig:causal}). 

\begin{definition} {\sc (Increment-Weighted Line Graph)}
\label{def:lg}
Let $\cL_\cG(\cE, \cF, \cW)$ be 
the {\em Increment-Weighted Line Graph} of $\cG(\cV, \cE, \cT)$, written $\cL_\cG$ for short. 
Here, $\cE$  plays the role of the vertex set of $\cL_\cG$ and  $\cF \subset \cE \times \cE$, is the set
of relationships between nodes of $\cL_\cG$ (edges of $\cG$). The set $\cW$ contains increment weights 
$w_{rs} \in \mathbb{R}^+$.   
Specifically, given  $e_r, e_s \in \cE$ with $e_r = (i_r, j_r, t_r)$ and $e_s = (i_s, j_s, t_s)$, the line graph edge $f_r = (e_r, e_s, w_{rs}) \in \cF$ 
if and only if the following properties are satisfied.

\begin{itemize}
\item[{\bf (P1)}] {\bf Tip-to-tail Connectivity.} $j_r = i_s$. 
\item[{\bf (P2)}] {\bf Temporal Causality.}                $t_r \leq t_s$.            
\end{itemize}

And the weights of $\cL_\cG$ are defined as:

\begin{itemize}
\item[{\bf (P3)}] {\bf Increment Weights.}        $w_{rs}=t_s - t_r$.            
\end{itemize}
\end{definition}

{\sc Note:} In a slightly different definition that uses {\bf (P2')} $t_r < t_s$, with strict inequality, 
the resulting increment-weighted line graph would be a {\em directed acyclic graph} (DAG), 
which fits our understanding that it is modeling the flow of information in a dynamic graph $\cG$.    
The absence of cycles would be equivalent to information not flowing backwards in time.  
However, many large realistic datasets may have relatively quick-acting information flows with respect
to their time resolution (say timestamp fidelity is at the hour level and we would like to capture information 
propagations that happen at the minute level). We therefore use the non-strict inequality, 
{\bf (P2)} $t_r \leq t_s$, which results in a mostly DAG-like line graph $\cL_\cG$ that 
has some strongly connected components where connected ties in timestamps occur in the data.

Typically $\cL_\cG$ is a much-reduced subgraph of the complete line graph 
(the one formed without the causality and tip-to-tail restrictions), 
but in general $\cL_\cG$ will have too many line graph edges to build from a large 
graph $\cG$, due to the quadratic scaling induced by the existence of high degree nodes.

\subsubsection{Hierarchical Agglomerative Clustering}

\hfill

We turn to a promising method of clustering that also contains a solution for the problem
 of generating and storing the increment weighted line graph $\cL_\cG$.  
Drawing heavy inspiration from HDBSCAN, 
consider clustering a graph using a hierarchical agglomerative algorithm. 
There are many benefits to this form of clustering such as needing minimal a priori knowledge of your data, 
allowing clusters to exist at many different scales w.r.t. the graph weighting, 
and allowing for a set of nodes that doesn't belong to any cluster. 
These are all excellent reasons to use this method of clustering but the most important
 benefit is that the computation only requires a minimum spanning tree
 of the input graph.

 The algorithm described in  \cite{campello2013hdbscan} is intended to run on data where there exists an all-to-all distance 
 between points. This set of distances generates a dense graph which is generally too large to work with.
  The first step in the HDBSCAN algorithm when dealing with
  this dense graph is to compute a minimum spanning tree (MST). This serves a dual purpose in that the MST
  is much smaller and easier to work with, and the clustering algorithm only needs the information 
  found in the MST. In other words the portion of the graph that is not contained in the MST is redundant.

 Our approach begins with a weighted 
  graph that is generally much sparser than the fully connected weighted clique, 
  and proceeds with the latter parts of the HBDSCAN algorithm. HDBSCAN requires one
  input parameter, $M$, that defines what the smallest allowed cluster size is. This {\em minimum cluster size}
  is generally chosen based on the dataset and the desired size for the clusters.
  Leaving additional details of the clustering algorithm to \cite{campello2013hdbscan}, 
  given a weighted graph $\cG$, the hierarchical agglomerative clustering algorithm we 
  will be using is outlined in Algorithm~\ref {alg:hier_cluster}.

\begin{algorithm} 
\caption{{\bf Hierarchical Agglomerative Clustering} }
\label{alg:hier_cluster} 
\begin{algorithmic} 
 
\State {\bf Input:} Weighted Graph $\cH(\cV,\cE,\cW)$, minimum cluster size $M$.

\State {\bf Output:} Set of node clusters. 
 
\State 

\State 1. Compute minimum spanning tree of $\cH$: $\text{MST}(\cH)$.

\State 2. Sort edges of $\text{MST}(\cH)$ based on weights.

\State 3. Build the cluster hierarchy (dendrogram) using a union-find data structure.

\State 4. Condense the cluster hierarchy using $M$ as a threshold for which clusters are allowed.

\State 5. Extract clusters based on {\em volume} of condensed clusters in  {\em dendrogram space}.

\end{algorithmic} 
\end{algorithm}

The graph we cluster is the increment-weighted line graph $\cL_\cG$, but since $\cL_\cG$
is directed, there are different notions of connectivity to consider.
We use weak connectivity since we are interested in
patterns such as the one-way flow of 
information throughout a directed graph
which only require one-way reachability between nodes. 
We achieve this type of clustering by replacing the directed edges of $\cL_\cG$ with undirected edges and 
applying Algorithm~\ref {alg:hier_cluster} to the resulting undirected version of $\cL_\cG$.
The clusters of the undirected version of $\cL_\cG$ are then interpreted as weakly connected clusters
of $\cL_\cG $.

For large enough datasets, it is computationally expensive to apply 
Algorithm~\ref {alg:hier_cluster} to the $\cL_\cG$ since it is unreasonable 
to even just build and store $\cL_\cG$.   Fortunately, the first step of the clustering algorithm
 implies that we do not need all the edges of $\cL_\cG$. In section \ref{sec:iwlg} 
 we will present a suitable subgraph of $\cL_\cG$ called a {\em Line Graph Skeleton} that is bounded in size
 and can be computed directly from the original dynamic graph $\cG$. We also show that the Line Graph Skeleton can
 be constructed in a highly parallelizable way and shares a connectivity structure with $\cL_\cG$.
 In Section \ref{sec:distributed} we cover modifications
 to Algorithm~\ref {alg:hier_cluster} for datasets that necessitate distributed scale hardware for storage and computation.
 Section \ref{sxn:experiment} contains two examples of clustering the edges of dynamic graphs with section \ref{sxn:small_experiment} focusing on a smaller dataset and presenting many of the unique properties of our 
 clustering algorithm. Section \ref{sxn:large_experiment}, on the other hand,
  demonstrates the scalable performance on a much larger graph. We finish with a discussion of several 
  extensions and generalization to the work of this paper in Section \ref{sxn:extensions}.

\section{Efficient Increment-Weighted Line Graph Clustering} 
 
 \label{sec:iwlg}
 
Here, we describe an efficient approach to the line graph clustering described in the previous section.
Important theoretical properties related to this approach are proven in the 
Appendix~\S \ref {sxn:theoryappendix}.

A large, real-world graph often has a heavy-tailed vertex degree distribution. 
The cost of explicitly building and computing graph analytics for an associated line graph is 
well-known to be prohibitively expensive, as the number of edges in the line graph is:

$$ 
|\cF| = \sum_{i \in \cV} {d_i \choose 2}. 
$$ 

In many real-world applications $d_{max}$ is order $|\cV|$ and thus $|\cF|$ is order 
$|\cV|^2$, and computation is not practical for applications involving even as few as millions of vertices and edges.   
Even though $\cL_\cG$ is a subgraph of the full line graph, it still suffers from the 
same quadratic scaling and will also be difficult or impossible to generate for general dynamic graphs.
We take a vertex-local approach to implicitly represent the connectivity of $\cL_\cG$ without forming it explicitly.

Heuristically, an edge in $\cL_\cG$ exists between two nodes of $\cL_\cG$ if the corresponding edges
of $\cG$ share a node in the appropriate way.
This means that every edge of the line graph comes from the neighborhood of one single node of the original 
graph $\cG$. In fact, we are able to build $\cL_\cG$ by having each node of $\cG$ build the portion of the line graph that it
is responsible for and then performing a union over all of the {\em node local line graph} pieces. More formally:

\begin{definition} {\sc Node Local Formation of Increment-Weighted Line Graph}
\label{def:nliwlg}
\\\hspace{\textwidth}
For each node $i \in \cG$ define the {\em Node Local Increment Weighted Line Graph} or $\cL_\cG(i)$ 
to be the subgraph of $\cL_\cG$ that is generated from applying Definition \ref {def:lg} to $\cE_i$. 
The union of $\cL_\cG(i)$ over all the nodes in $\cG$ will result in the Increment Weighted Line Graph:
$$\cL_\cG = \cup_{i\in \cV} \cL_\cG(i).$$
\end{definition}

This means that $\cL_\cG$ can be generated in a highly parallelizable way with each 
node of $\cG$ contributing $\cL_\cG(i)$ independently of the other nodes. 
This still does not address the issue of $\cL_\cG$ being too large to even hold in memory.
Fortunately, Algorithm~\ref {alg:hier_cluster} implies that the entire graph is never used
for more than generating a minimum spanning tree. 

What we actually need is a minimum spanning tree of $\cL_\cG$, denoted
$\cM_\cG = $ MST$(\cL_\cG)$, or a much sparser subgraph of $\cL_\cG$ that contains an MST. 
Instead of creating $\cL_\cG$ 
and then applying an algorithm to find $\cM_\cG$, we present a method to compute
a bounded subgraph of $\cL_\cG$ with the same connectivity of $\cM_\cG$.
Taking inspiration from Def.~\ref{def:nliwlg}, we will compute this subgraph of $\cL_\cG$  
in a similar node-local way.  For each node of the original graph $\cG$, compute a local minimum spanning tree instead 
of having each node compute its full contribution to $\cL_\cG$. The task for each node
is detailed in Algorithm~\ref {alg:nlmst}. Putting this all together yields 
the {\em Implicit Increment Weighted Line Graph} or {\em Line Graph Skeleton} (denoted $\cL^*_\cG$) of $\cG$ detailed in Algorithm~\ref {alg:skeleton}.

\begin{algorithm} 
\caption{Generate Node Local MST($\cL_\cG$) $:= \cM_\cG(i)$} 
\label{alg:nlmst}
\begin{algorithmic} 
\State Input:  Dynamic graph $\cG$ and a node $i \in \cG$ 
\State Output:  Edgelist of $\cM_\cG(i)$
\State 

\State $\cE_i :=$ Edge incidence neighborhood of $i$
\State Sort $\cE_i$ by time.

\State $W=$ [ ] \Comment{empty array} 
\State  output $=$ [ ] \Comment{empty array} 
 \For{ edge $e_j \in \cE_i$ } 
 \If{$e_j \in \cE^{(in)}_i$} 
    \State $W.\text{append}(e_j)$ 
\Else 
   \For{$e_k  \in W$} 
        \State $\text{output.append}(\, (e_k,e_j)\, )$ 
   \EndFor 
     \State $W = [W[-1]] $ \Comment{The in-edges in W have been included in the MST; delete all but the last} 
\EndIf 
\EndFor 

\State \Return output
\end{algorithmic} 
\end{algorithm} 

A few high-level constructs are employed in the listing of Algorithm~\ref{alg:nlmst} that are 
likely familiar to python developers.  Given a {\em mutable array} or {\em list} called \texttt{mylist}, 
the operation \texttt{mylist.append(item)} adds ``item" as the last element of the list.  Second, \texttt{mylist[-1]}
returns the last value in \texttt{mylist}. Lastly, \texttt{mylist=[]} and \texttt{mylist=[item]} creates an empty list and 
a single element list with ``item" in it, respectively. See Figure \ref{fig:skeleton} and Appendix \ref{sxn:theoryappendix} for 
an example illustrating Alg.~\ref{alg:nlmst} along with a more in-depth mathematical discussion.

\begin{algorithm} 
\caption{Generate {\em Line Graph Skeleton} of $\cG:= \cL^*_\cG$}
\label{alg:skeleton}
\begin{algorithmic} 
\State Input:  Dynamic Multigraph $\cG$

\State Generate adjacency list representation for $\cG$

 \For{ node $i \in \cV$ } 
    \State Compute Node Local MST($\cL_\cG$): $\cM_\cG(i)$ \Comment{Alg.~\ref {alg:nlmst}}

\EndFor

\State Compute Implicit Increment Weighted Line Graph $\cL^*_\cG$:
  
$\cL^*_\cG := \cup_{i\in \cV} \cM_\cG(i)$

\end{algorithmic} 
\end{algorithm}

Algorithm~\ref {alg:skeleton} finally gives us what we need in order to perform an edge clustering on 
the original dynamic graph $\cG$. This {\em Line Graph Skeleton} of $\cG$, $\cL^*_\cG$,
has two properties that make 
it an ideal line graph substitute for the type of clustering we are planning to achieve. First, the number of 
edges of  $\cL^*_\cG$ is bounded by $2 |\cE|$ (twice the original number of edges in $\cG$):
 
$$
|\text{edges}(\cL^*_\cG)| = 
\sum_i |\text{edges}(\cM_\cG(i))| \leq 
\sum_i (|\cE_i| -1)= 
2|\cE| - |\cV|
$$

The inequality comes from the possible reduction in connections of $\cL_\cG$ due to 
the causality and tip-to-tail constraints. The final factor of two comes from the fact that
every edge of $\cG$ will be in the neighborhood of two nodes. The total number of 
nodes in $\cG$,  $|\cV|$, appears because the MST for a graph with $N$ nodes 
will have at most $N-1$ edges.

The second important property of $\cL^*_\cG$ is that it shares a connectivity structure with 
the minimum spanning tree of $\cL_\cG$. In Theorem~\ref{thm:lasttheorem} (in the Appendix) 
we show that the line graph 
skeleton $\cL^*_\cG$ has equivalent 
weight-filtered connected components to $\cL_\cG$ or:
$$ 
\mbox{comps}( \cL^*_\cG, {\Delta t}) \equiv  \mbox{comps}( \cL_\cG, {\Delta t}). 
$$ 
 meaning that we are free to cluster the bounded $\cL^*_\cG$ in place of the $\cL_\cG$.

We summarize the dynamic graph edge clustering algorithm of Section \ref{sec:iwlg} in Algorithm \ref{alg:edgecluster}.

\begin{algorithm} 
\caption{{\bf Dynamic Graph Edge Clustering} }
\label{alg:edgecluster} 
\begin{algorithmic} 
 
\State {\bf Input:} Weighted Dynamic Multigraph $\cG(\cV,\cE,\cW)$, minimum cluster size $M$.

\State {\bf Output:} Set of edge clusters. 
 
\State 

\State 1. Compute Line Graph Skeleton of $\cG : \cL^*_\cG$ using Alg. \ref{alg:skeleton}.

\State 2. Run Alg. \ref{alg:hier_cluster} on the undirected version of $\cL^*_\cG$

\State 3. The node clustering of  $\cL^*_\cG$ is the desired edge clustering of $\cG$

\end{algorithmic} 
\end{algorithm}

\section{Distributed memory scalable algorithm}

\label{sec:distributed}

Many graph datasets large enough to warrant distributed scale compute 
environments for both storage and computation. 
In this section we present a parallel adaptation of Algorithm~\ref {alg:edgecluster} for such graphs.
The first step is to generate $\cL^*_\cG$ given a graph $\cG$.
Algorithm~\ref {alg:skeleton} is already well-structured for distributed computing. 
Each node of $\cG$ only needs its own set of edges to compute its contribution
to $\cL^*_\cG$ . In addition, each node's contribution takes up less space than the size 
of the neighborhood of that node, i.e. for any node $i$ in $\cG$: 
the number of edges of $\text{MST}(\cL_\cG(i)) < |\cE(i)|$. 
This means that the only communication needed between 
compute nodes happens at the initial step of building the adjacency list and at the end when 
performing the union of  $\cM_\cG(i)$ and subsequent sorting of the edges of the MST 
in step 2 of Algorithm~\ref {alg:hier_cluster}.
 
For large problems, iterating through all the edges of $\cL^*_\cG$ in order to produce the 
dendrogram in step 3 of Algorithm~\ref {alg:hier_cluster} is an expensive serial operation.
To deal with this issue we introduce an approximation to the dendrogram in the form of a series 
of connected components with increasing weight thresholds. 
Any level $w_i$ of the true dendrogram can be interpreted as the resulting connected components 
of $\cL^*_\cG$ if you remove edges of $\cL^*_\cG$ with weight greater than $w_i$. 
For the true dendrogram this set of components is then incrementally updated as each new edge of 
$\cL^*_\cG$ is added in increasing weight order. In order to discretize the dendrogram, 
we add many edges at once in batches to 
reduce the number of levels of the dendrogram from $O(|\cE|)$ to $N$, where $N$ is a constant chosen so that we 
can compute $N$ connected component runs in a reasonable time. This has the effect of creating 
a coarse representation of the single linkage dendrogram and is just an approximation of the
true dendrogram, but it is still representative of the underlying structure as long as we are free to choose
a reasonable set of weight thresholds.

\begin{figure}[h]
\centering
\begin{subfigure}{.5\textwidth}
  \centering
  \includegraphics[width=.9\linewidth]{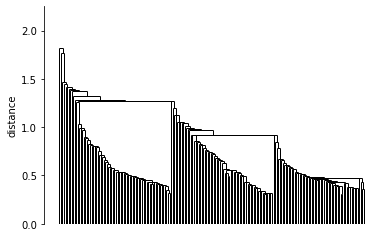}
  \subcaption{Dendrogram formed by adding each edge in series.}
  \label{fig:cliquegraph}
\end{subfigure}%
\begin{subfigure}{.5\textwidth}
  \centering
  \includegraphics[width=.9\linewidth]{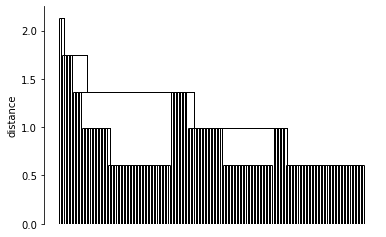}
  \subcaption{Coarse dendrogram formed using 5 levels.} 
  \label{fig:stargraph}
\end{subfigure}
\caption{The X axis corresponds to a sorted ordering of the nodes based on the distance threshold where the nodes are added to the MST. This is represented in the dendrogram via a horizontal line at that height and sorting the nodes such that there are no line crossings.}
\label{fig:dendro}
\end{figure}

Given a weighted graph $\cH(\cV,\cE,\cW)$ and a set of thresholds 
$[ \omega_1, ... \omega_q ]$ we compute the {\em discrete dendrogram} as follows.
For each threshold, $\omega_i$, only allow edges with weight less than $\omega_i$
and compute the connected components denoted: $\mbox{comps}( \cH, \omega_i)$. 
For any two thresholds such that $\omega_2 > \omega_1$, we see that the connected components 
are {\em nested} in the following sense. For any node $v \in \cV$, let $\cH_1$ be the component in 
$\mbox{comps}( \cH, \omega_1)$ containing $v$ and $\cH_2$ be the component in 
$\mbox{comps}( \cH, \omega_2)$ also containing $v$. The nested property implies 
$\cH_1 \subset \cH_2$. Therefore, we can represent the relationship between components 
at two different thresholds as an edge $(\cH_1 \subset \cH_2)$ 
within a dendrogram $\cD$, or a tree that stores the 
parent-child containment relationships between nested partitions. 
This process is summarized in Algorithm \ref{alg:build_dendro}.

\begin{algorithm} 
\caption{Build Discrete Dendrogram} 
\label{alg:build_dendro} 
\begin{algorithmic} 
 
\State {\bf Input:} Weighted Multigraph $\cH(\cV,\cE,\cW)$, 
ordered set of increasing increments $[ \omega_1, ... \omega_q ]$. 
 
\State {\bf Output:} Dendrogram $\cD$. 
 
\State

\For{ $\Delta \omega_i \in [ \omega_1, ... \omega_q ]$} 
 
\State Compute $\mbox{comps}( \cH, \omega_i)$. 
 
\For{ $\cH_k \in \mbox{comps}( \cH, \omega_i)$} 
 
\State Add component $\cH_k$ as parent vertex in $\cD$ 
\State Add to $\cD$ edges from parent vertex $\cH_k$ to 
all constituent components (child vertices)

\EndFor 
\EndFor 
 
\State \Return Dendrogram $\cD$. 
 
\end{algorithmic} 
\end{algorithm}

The discrete dendrogram does represent a loss in resolution,
but this is an acceptable tradeoff when the more detailed version is not 
computable. In Figure \ref{fig:dendro} the dendrogram is discretized using 5 equally 
spaced bins, and the overall structure is maintained. Because the dendrogram is 
produced using the connectivity of the minimum spanning tree of $\cL_\cG$,
there are going to be many more connections taking place at lower weight 
thresholds. This implies that for most datasets it would be better to use
non-uniform bin spacing with more smaller bins at lower weight 
thresholds and fewer, larger bins for higher thresholds.

This brings us to Step 4 of Algorithm \ref{alg:hier_cluster}. It can be performed as 
described in \cite{campello2013hdbscan} on the discrete dendrogram. Normally the runtime
 of HDBSCAN 
in this step depends on the minimum cluster size parameter. This makes sense
 as a lower minimum cluster size means that more potential clusters in 
the dendrogram will need to be considered.
 For the discrete dendrogram each bin or level is handled simultaneously, i.e.
there will only be a small number of steps required (equal to the number of
bins used to generate the dendrogram) regardless of what the minimum cluster size is.
This means that when using the discrete dendrogram the limiting parameter for this
step is now the number of levels used and removes the scaling dependence on minimum 
cluster size. The tradeoff is that a coarser discretization of the dendrogram
might not have the resolution necessary to find small clusters.

\section{Experiments} 
 \label{sxn:experiment}

Analysis involving one or more modern social network platforms can easily reach internet or 
global scale datasets, involving the user accounts of millions of humans and accounts driven by 
computer programs ({\em bot activity}).   
Modeling the peer-to-peer interactions on these platforms as a dynamic multigraph facilitates trend analyses, 
link prediction, anomaly detection,
behavior analyses, and other topology-related analytics, and clustering of the records is an important line of 
approach to aide in such efforts.

Imagine an archetype platform for this discussion.
The human users interact on the platform at different rates, some use it sporadically, 
some a few times a week, some daily.   {\em Super users} are engaging on the platform almost 
the entire time they are awake.   
Moreover, a single user might interact with some topics heavily and others intermittently. 
If the platform is globally popular the users can live all over the globe and different timezones 
will dictate when they tend to be more active.   
Topical communities involving many users exist within the 
platform, but they are often woven through time in complicated ways, as some communities have user 
constituents from many timezones that participate in the community at various timescales.

There are many reasons for bot accounts ranging from functional such as giving instructions and tips, to 
users who invoke them to exploitative bots masquerading as human users.
The bots may have super-human participation rates or be programmed to act more like an actual user.
For analytics aimed at human activity, 
like serving advertisements, it may be ideal to filter out computer generated records while 
other analytics may be focused on the bot activity itself or even just identifying bot accounts.

In this section we demonstrate the line graph-based clustering on two such peer-to-peer temporal datasets.
In \S\ref{sxn:small_experiment} we apply the serial techniques to a relatively small set of openly available 
anonymized email sender-receiver records with timestamps.   Here we show many details of the clustering 
output, including cluster sizes and durations, and many graph properties of the resulting clusters. 
In \S\ref{sxn:large_experiment} we apply the distributed techniques to a massive set of social media 
author-to-author comments.   
In this case, we filter out records associated with the largest known bot account, 
and measure the strong 
scaling of the most expensive algorithmic phases, namely implicit line graph construction and dendrogram
computation.

\subsection{Cluster Analysis}
\label{sxn:small_experiment}

The clustering that results from applying Algorithm \ref{alg:edgecluster} to a dynamic graph has many desirable 
features when it comes to understanding the underlying structure of the data. This is, in part, due to the 
relatively unique clusters formed when compared to the generally clique-like clusters produced by
other graph clustering algorithms. 
 Each edge is a part of only one cluster but because we are dealing with a multigraph, 
any given node can be a participant in any number of clusters. 
Additionally, each cluster is
a group of edges which can be interpreted as a connected subgraph.
In this section, we will present a high level   
overview of the clusters found by applying Algorithm \ref{alg:edgecluster} to the {\em Email Network} 
dataset presented in \cite{emaildata}.

\subsubsection{Data Overview}

\hfill

The graph is derived from an email network belonging to a large European
 research institution. In particular, we will be using the instance of the data hosted on 
 the SNAP database \cite{snapnets}. The data spans 18 months (525 days) and consists only of the 
 {\em core} of the network where every edge is both to and from one of the members of the institute.
 The nodes are anonymized representations of the people and each email corresponds 
 to a number of directed edges equal to the number of recipients of that email, each with the same timestamp. 
  The resulting graph contains 986 nodes and 329,910 edges.

\subsubsection{Clustering}

\hfill
 
The {\em line graph skeleton} that is formed during the clustering process 
is indeed larger than the input graph with 329,910 nodes (one for each edge of the original directed multigraph)
and 646,653 weighted edges, falling just shy of the bound presented in \S \ref{sec:iwlg}. 
 The clustering was performed with a minimum cluster size of 5, meaning that any record 
needs to be connected in a group of at least 5 (as prescribed by Def. \ref{def:lg}) before
they are considered a cluster candidate. This threshold results in a clustering that 
designates 152,074 records, 
or about $46\%$ of the original edges, as not belonging to any cluster. 
This designation can be a powerful filtering tool and is 
controlled by the minimum cluster size where a larger value results in more edges left unclustered.

The remaining 177,836 edges are contained in 15,986 clusters ranging in size from 
5 to 175 with the distribution shown in Figure~\ref{fig:size_distro}. The most striking
observation here is that there is a large number of small clusters with no semblance of a 
giant component. Looking at the number of clusters per day in Figure~\ref{fig:cluster_per_day} 
 reveals that there 
are only about 50 clusters active on any given weekday and around 5 per day for the weekend.
Each node of the original graph can participate as a member of many different clusters 
as seen in Figure~\ref{fig:node2cluster}. This is due to the nature of an edge clustering;
each node has potentially many connecting edges and each of those edges could be placed
in different clusters. 

Figure \ref{fig:size-duration} compares the number of emails involved in any given 
cluster to the total duration of the cluster, i.e. the time between the first and last emails. 
Nearly all of the clusters have a duration less than 8 hours with a preference 
for shorter clusters around an hour in duration. This preference is an artifact of the 
distance function that puts more weight on connections between records that are happening closer to 
each other in time. Even with this weighting, there are clusters that exist 
across several days, as seen in Table~\ref{tab:longcluster}.
This all suggests that the clustering was able to capture the relatively tiny scale of several 
hours consistently over the 18 month timeframe.

\begin{figure}[h]
  \centering
  \includegraphics[width=3.0 in]{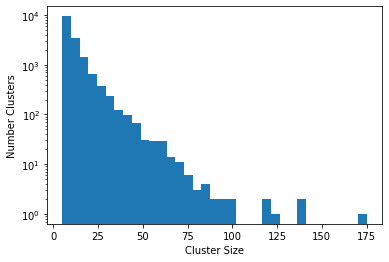}
  \caption{Log-scale histogram of cluster sizes. More than half of the clusters contain fewer than 10 records and only a few contain more than 100 records.} 
  \label{fig:size_distro} 
\end{figure}

\begin{figure}[h]
  \centering
  \includegraphics[width=5.7 in]{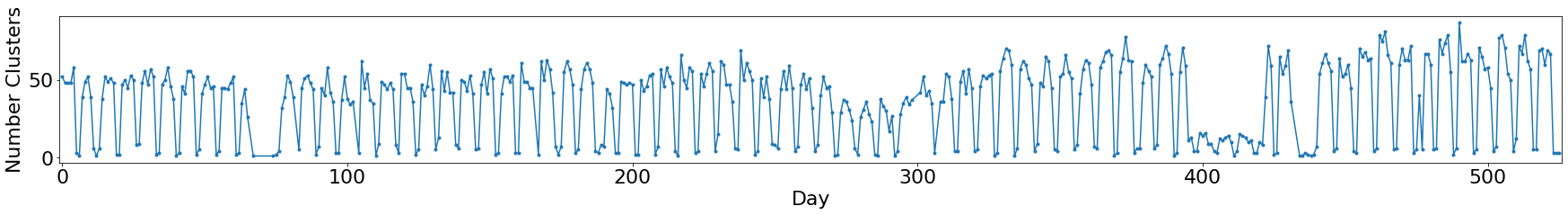}
  \caption{Number of clusters with an email exchange on the given day. There is an obvious pattern of weekdays vs weekends along with lulls in activity that likely correspond to holidays. There is also a general increase in cluster count over time.} 
  \label{fig:cluster_per_day} 
\end{figure}

\begin{figure}[h]
  \centering
  \includegraphics[width=3 in]{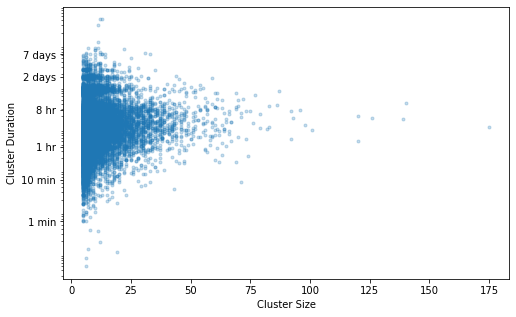}
  \caption{Cluster size (number of emails) compared to the duration of the cluster (time between first and last record). The preference for tighter connections in time and relatively small min cluster size of 5 means that the bulk of the clusters fall under 8 hours in duration. There are also visible artifacts of slightly denser regions at y-axis heights corresponding to durations of 2 and 3 days as well. } 
  \label{fig:size-duration} 
\end{figure}

\begin{figure}[h]
  \centering
  \includegraphics[width=3 in]{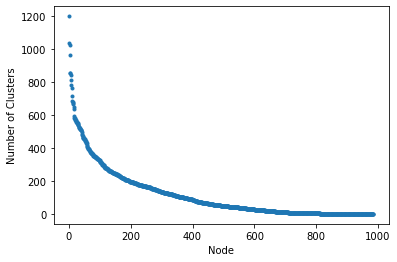}
  \caption{A plot of how many clusters each node of the original graph is a member of. The nodes here are sorted based on the total number of clusters (conversations) they participate in. Almost half the nodes are seen in 50 or fewer clusters and the most prolific node is in 1,198 clusters.} 
  \label{fig:node2cluster} 
\end{figure}

\newpage
 
\subsubsection{Example clusters}

\hfill
 
The purpose of this section is to explore a set of clusters chosen to highlight some 
of the unique aspects of the clusters. It is only meant to provide a brief glimpse 
into the clustering and not an in-depth or exhaustive analysis. 
Starting with the largest cluster shown in Figure~\ref{fig:biggraph}, we already see some noteworthy features. 
The subgraph induced is not very clique-like and most nodes are only connected to one or two
other nodes within this cluster of edges. This branching behavior can be seen in an extreme form in 
Table~\ref{tab:chain_graph} and Figure~\ref{fig:diametergraph} which both have chains
of emails happening in succession. Table~\ref{tab:chain_graph} is a simple example of a
relay where each email could be passing forward information from the previous. 
Figure~\ref{fig:diametergraph} is slightly more involved with 2 branches emanating from a source.

Some clusters do present as more clique-like, although it does seem less common in this particular 
dataset where most records are just one sender to one receiver. The two examples of
Figure~\ref{fig:cliques} feature emails that are sent to multiple recipients and are likely met
with one or more ``reply all" emails that create a tightly connected cluster both topologically and temporally.

\begin{figure}[h]
  \centering
  \includegraphics[width=3 in]{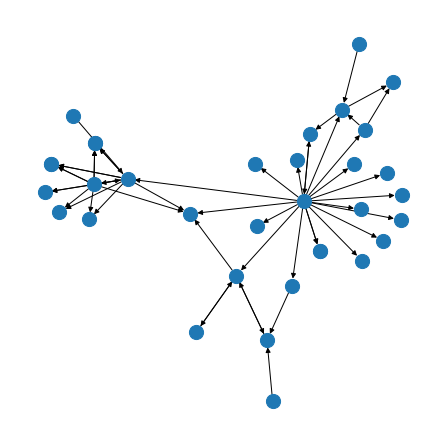}
  \caption{The graph induced by the largest cluster of edges. There are 175 emails exchanged over 3 hours of time between 34 total participants.} 
  \label{fig:biggraph} 
\end{figure}

\begin{table}
\rowcolors{2}{gray!25}{white}
  \centering
  \begin{tabular}{|c|c|c|}
  \hline
  Sender & Receiver &  Time\\
  \hline
  987 & 433 & 08:02:52 \\
  433 & 32 & 08:31:20 \\
  32 & 977 & 09:04:17 \\
  977 &418 & 09:09:08 \\
  418 & 996 & 09:19:59 \\
  \hline
  \end{tabular}
  \caption{A simple cluster that represents a chain of emails over about an hour and a half.}
  \label{tab:chain_graph}
\end{table}

\begin{figure}[h]
\centering
\begin{subfigure}{.5\textwidth}
  \centering
  \includegraphics[width=.8\linewidth]{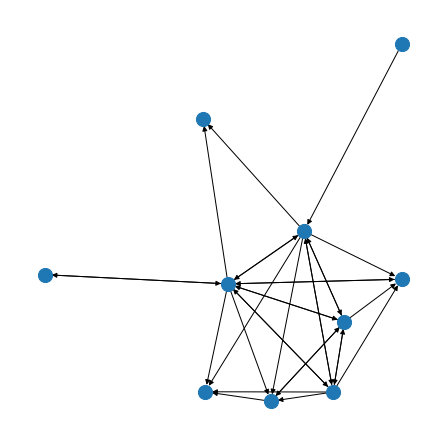}
  \label{fig:cliquegraph}
\end{subfigure}%
\begin{subfigure}{.5\textwidth}
  \centering
  \includegraphics[width=.8\linewidth]{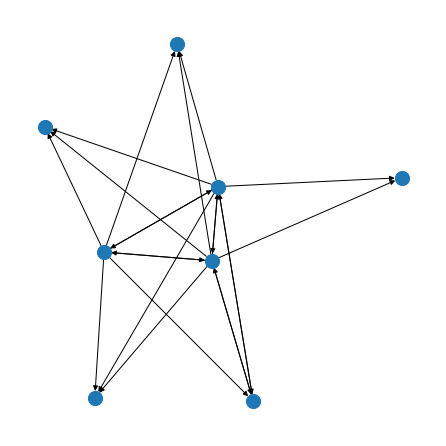}
  \label{fig:stargraph}
\end{subfigure}
\caption{Two highly connected clusters. Both contain multiple ``reply all" style emails over the course of a day. }
\label{fig:cliques}
\end{figure}

\begin{figure}[h]
  \centering
  \includegraphics[width=3 in]{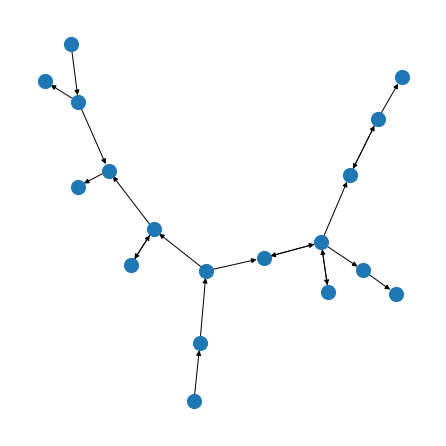}
  \caption{A conversation between 18 people lasting 7 hours. There are 21 emails sent starting from the bottommost node and then branching up to the left and right over the day.} 
  \label{fig:diametergraph} 
\end{figure}

 \begin{table}
\rowcolors{2}{gray!25}{white}
  \centering
  \begin{tabular}{|c|c|c|c|}
  \hline
  Sender & Receiver & Day &  Time\\
  \hline
  317 & 441 & 479 &  14:21:00 \\
  441 & 317 & 490 & 08:27:14 \\
  441 & 304 & 490 & 09:26:06 \\
  441 & 304 & 491 & 12:31:39 \\
  304 & 317 & 491 & 13:14:30 \\
  317 & 304 & 492 & 07:50:42 \\
  304 & 317 & 492 & 13:00:39 \\
  317 & 304 & 492 & 14:41:14 \\
  304 & 317 & 493 & 06:37:41 \\
  304 & 317 & 493 & 12:17:08 \\
  317 & 304 & 493 & 12:20:24 \\
  \hline
  \end{tabular}
  \caption{A cluster that spans several days. In the email record landscape this cluster with only 2-3 emails per day for four days was still able to stand out as significant. The inclusion of the first record, even though it happened 10 days prior to the second, is most likely because it did not connect to any other cluster and its inclusion here still satisfies the {\em causality rules}.}
  \label{tab:longcluster}
\end{table}

\newpage

\subsection{Large Scale Performance}
\label{sxn:large_experiment}

\subsubsection{Reddit Data}

\hfill

Reddit is a global scale social media platform organized into categories, called subreddits, in which users 
author posts. Other users are then able to author comments in response to the posts or in response to
the previous comments.  This can be represented as a comment tree, where the original post is 
the root node and any subsequent response
is a child node that is connected to the parent post or comment.

We downloaded 15 years of reddit comments from {\em pushshift.io}
\footnote{which was freely available for research at https://pushshift.io/signup}
as a dynamic multigraph edge record from the author of the child comment to the author of the parent record 
(comment or post).   
This creates a large temporal peer-to-peer network, containing 836 million vertices (author accounts) and 
7.21 billion dynamic multigraph edge records (comments).   
We seek to efficiently extract conversational bursts that are dense in time and connected.

We list several important modeling details.  The 
data is available in monthly chunks, and several accounts are deleted when the data provider acquires the data each month.   
All deleted accounts are marked with the same name, {\tt [deleted]}.
These deleted accounts could have been present in a previous month, or not, 
as it could have been created and deleted within the month.   
We represent the author of record {\tt <link\_id>} from a deleted account as {\tt [deleted]\_<link\_id>}.    
This has the effect of assigning a record from a deleted account to have a unique individual user authoring
no other records, and 
conversations where a deleted account happens to be central would be difficult to recover using topology alone. 
Nevertheless, the dataset serves our purposes, as many clusters of interest can still be extracted with our approach.

One user account {\em AutoModerator} is a known helper bot, and is the most active account in the 
dataset.   We remove the records associated with this account, which makes the computation involved quite a bit
easier.   More generally, if analyzing human-related activity is the primary goal, it 
is likely a good idea to remove other known automated accounts. However, the  {\em AutoModerator} account is the
only account we removed in our results for this paper due to sheer scale of activity.

\subsubsection{Performance}

\hfill

We measure the parallel scalability of the dendrogram building phase from Algorithm~\ref{alg:skeleton}. 
Our C++/MPI implementation is based on our team's asynchronous distributed communication library 
YGM\footnote{https://github.com/LLNL/ygm}, which has previously been utilized to improve scalability 
for several other distributed graph analytics \cite{priest2019ygm}.
The associated line graph skeleton contains 12.8 billion line graph edges, a factor of 1.78 times the 
edges in the original dynamic multigraph.

We ran all tests on the {\tt quartz} HPC system at LLNL, a large cluster of 2988
compute nodes containing 36 cores (Intel Xeon, E5-2695 v4, 2.1GHz) and 128 GB of DRAM, with a Cornelis Networks Omni-Path interconnect.
On 64, 128, 256, and 512 compute nodes 
we built the dendrogram for $\Delta t \in \{ 1,2,4, ..., 2^{28} \}$ (1 second up to over 8 years).
As can be seen in Figure~\ref{fig:reddit_strongscaling} and Table~\ref{tab:reddit_results}, we are able to compute the dendrogram on 15 years worth of reddit in under 4 minutes,
with decent strong scaling up to 256 compute nodes.

\begin{table}
\rowcolors{2}{gray!25}{white}
  \centering
  \begin{tabular}{|c||c|c|}
  \hline
  Nodes &  $\cL^*_\cG$ Construction Time (s) & Discrete Dendrogram Construction Time (s)\\
  \hline
  64 & 16.9131 & 619.584 \\
  128 & 11.453 & 340.537 \\
  256 & 9.5432 & 220.429 \\
  512 & 9.0423 & 277.24 \\
  \hline
  \end{tabular}
  \caption{Strong scaling results for line graph skeleton construction and computing the dendrogram in distributed memory via repeated 
  union-find-based connected components.  Original dynamic multigraph (author-to-author reddit comments) has 7.21B dynamic edge records and the line graph 
  skeleton has 12.8B weighted line graph edges. The approach has reasonable strong scaling until 256 compute nodes. }
  \label{tab:reddit_results}
\end{table}

\begin{figure}[h]
  \centering
  \includegraphics[width=3.0 in]{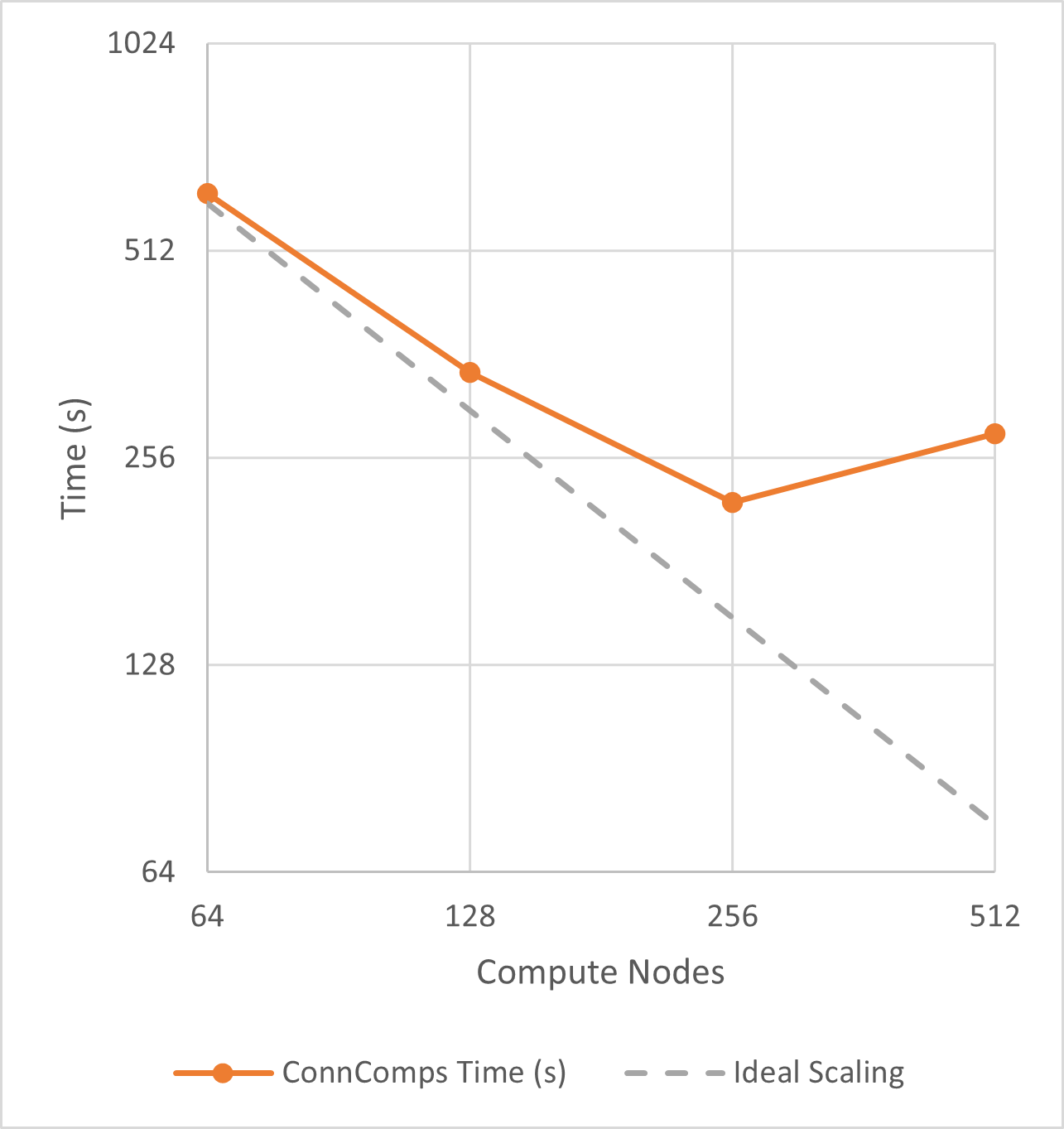}
  \caption{Strong scaling results for computing the dendrogram in distributed memory via repeated 
  union-find-based connected components.} 
  \label{fig:reddit_strongscaling} 
\end{figure}

\begin{figure}[h]
  \centering
  \includegraphics[width=5.25 in]{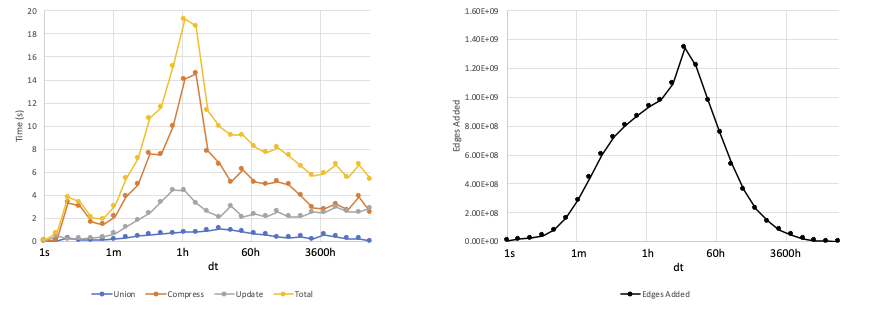}
  \caption{(Left) Timing for each phase of dendrogram construction and (Right) number of edges added, both for 256 compute nodes.} 
  \label{fig:reddit_timings} 
\end{figure}

In light of the previous section, we argue that there is much potential to apply sophisticated and 
relatively expensive topological behavior analyses to
the output clusters, which represent portions of higher-order related activity.   
The proposed clustering approach facilitates such analyses by breaking up the original dynamic multigraph 
into smaller, digestible chunks that can be related to each other by their many (dynamic) 
graph characteristics, some of which are much more easily computable on these small subgraphs than 
previous notions of clusters.   
Data analysis of the aggregated non-graph data from records within each cluster is also a possibility, 
although beyond the scope of this initial paper.

 \section{Extensions} 
\label{sxn:extensions}
 
 There are many avenues to generalize the clustering algorithm described in this paper.  
At a high level, we have presented a way to group records in relational datasets.
 These records are modeled as occurring only between two entities and at a single point in time.
 While there are many datasets where this is sufficient, it is not hard to construct
  more complex ones. Several extensions are discussed briefly with further
analysis beyond the scope of this paper.

\subsection{Nearly-Causal Relations and Time Filtered Line Graphs}

Real-world dynamic graphs often have some degree of error in time measurements (e.g. due to discretization). 
One may consider allowing an edge $f = (e_1,e_2,w_{12})$ in $\cF$ for which 
$-\delta \leq w_{12} < 0$, for some small $\delta>0$ that represents an allowable bound on the error for
 temporal observations. 
 One option to deal with {\em negative weights} in the clustering is to take the absolute value
and proceed as normal. 
As an illustrative example, consider a dataset where all timestamps have been 
 rounded to the nearest hour. For such data a $\delta =  2$ would still catch interactions between 
 pairs of edges where one timestamp was rounded down and the other up.

 Another possible adjustment is to limit the amount of time between any pair of connected records i.e. setting a 
 cutoff value $\delta_\text{max}$ such that an edge $f = (e_1,e_2,w_{12})$ is in $\cF$ 
 only if $w_{12}\leq \delta_\text{max}$. This has the potential to serve two purposes with the first being 
 a way to force the output of the clustering algorithm to return smaller 
 clusters that are more closely connected in time. The second purpose is that it reduces the total number
 of connections needed and thus the size of the resulting line graph making it easier to compute with. 
 The tradeoff with this restriction is that any interactions happening beyond the cutoff will be missed.

 \subsection{General Line Graph Weights} 
 \label{generalweights}
 
 In the bulk of our discussion we have defined the weights of the line graph to be the distance 
 in time between the corresponding edges of $\cG$, but there is nothing restricting us from
  using some other metric. For example,
   let $\cG$ be a graph with property-rich edges where each edge $e_r = (i,j)$ in $\cG$ has 
   a set of $m$ properties $U_r := \{u_{rk} : k =1,2, ... m \}$. Define a weight function that takes 
   in the properties of two edges of $\cG$ and returns a real number:
 
 $$
 f(e_1,e_2) = f(U_1,U_2) \in \mathbb{R}^+
 $$

 We can use this formulation with just timestamps for a more sophisticated weight function. 
The previous section presented one example with the idea of a cutoff time. Another example
is using a function of the time difference between records such as an exponential
function in cases where even more emphasis should be placed on short-term interactions.
 
$$
f(e_1,e_2) = e^{|t_2-t_1|^2+\|U(2,:) - U(1,:)\|^2} \qquad t_2 \geq t_1.
$$
 
Careful design of $f(\cdot, \cdot)$, or an adaptive framework that learns from examples, 
 is likely important for specific application datasets and tasks.

\subsection{Dynamic Hypergraphs} 
\label{hyperedge}
 
We consider a {\em dynamic hypergraph model}, where the hypergraph 
$\cG_h(\cV_h, \cE_h, \cT_h)$ is a discrete set of 
vertices, $\cV_h$, and a discrete set of dynamic hyperedges, 
$\cE_h$, that represent multi-way relationship events of vertices at specific times in continuous interval $\cT_h:=[0,T]$. 
Each dynamic hyperedge $e_r \in \cE_h$ is a vertex set $\cS_r$ coupled with a timestamp $t_r$ or 
$e_r = (\cS_r, t_r)$ representing a $|\cS_r|$-way 
relationship of vertices $\cS_r \subset \cV_h$ observed at time $t_r \in \cT_h$. 
For $i\in \cV_h$, let $d_i$ be the vertex degree (the number of hyperedges incident to vertex $i$) 
and for $e_r\in \cE_h$, let $d(e_r):=|\cS_r|$ be the hyperedge degree.  
This is an {\em undirected} or {\em unoriented} dynamic hypergraph model, in the sense that 
all vertices play the same role within a hyperedge.

Similar to the graph version, we can begin by defining a distance between hyperedges
based on the difference in time between when they occurred:

$$ 
\mbox{dist}( e_r, e_s) :=  |t_r - t_s|
$$
  
There exists an associated weighted line graph that is a combination of $d_i$-sized cliques and is
generally too expensive to store or work with explicitly for large graphs.   For combining the cliques 
in a multi-graph sense (versus taking the minimum distance on repeated edges), the number of multigraph 
edges in the line graph is  
$$
\sum_{i \in \cV_h} { d_i \choose 2 }
$$
This is quadratic in the size of the maximum vertex degree, and quadratic in the input for many
real-world, scale-free dynamic hypergraphs.

However, using the temporal distance $\mbox{dist}(e_r,e_s)$, much of the Algorithm~\ref{alg:edgecluster} 
immediately follows for building a clustering dendrogram with a sparse implicit skeleton representation 
of the associated line graph that faithfully represents the weighted connectivity.   
For each vertex in $\cV_h$, sort the incident hyperedges, build a local MST, 
and use the weighted connectivity of the union of all MSTs to build the dendrogram.

For an undirected dynamic hypergraph $\cG_h$, each local MST is merely a path graph that weaves its 
way through incident hyperedges in time-sorted order.

Note that the number of edges in the union of all MSTs is bounded by the average hyperedge degree 
$\overline{d}_h$ times the number of hyperedges,
$$
\sum_{i \in \cV_h} (d_i - 1) < \sum_{i \in \cV_h} d_i = \sum_{e_r \in \cE_h} d(e_r) 
= \overline{d}_h |\cE_h|.
$$
This bound is equivalent to the number of edges in the bipartite graph representation of the 
input hypergraph:     
it is the size of the input data (count of memberships of the form vertex $i$ belongs in hyperedge $e_r$), 
and is manageable no matter the average hyperedge degree size.

The methods in \S \ref{generalweights} can be applied to hypergraphs to describe a 
more general distance function between pairs of hyperedges. 
Since edges are now sets of vertices each instead of pairs, there are also additional metrics that
are unique to hypergraphs.
For example,  a Jaccard similarity score would measure how much overlap there is 
between two hyperedges allowing an additional way to measure how {\em close} one record is to another.

 \subsubsection{Directed Dynamic Hypergraphs}
 \hfill
 
For simplicity, \S\ref{hyperedge} described undirected dynamic hypergraphs.   However, a hypergraph often has 
two explicit vertex roles per hyperedge and a {\em directed dynamic hypergraph} is an 
appropriate representation.   
Examples include: a single sender broadcasts a message to multiple receivers, 
multiple reactant chemicals go through a chemical reaction that yields multiple product chemicals, and 
multiple players beat multiple other players at a team game. In these cases, Algorithm~\ref{alg:edgecluster} also yields an efficient representation of weighted line graph 
connectivity, with the local MST being formed in a similar manner to directed graphs.

\section{Conclusion}

Property graphs and, in particular, dynamic graphs are becoming ever more important in the data science landscape.
As the scale and complexity of datasets increases, it only makes sense to create new models and techniques for 
exploring the features they contain. We have presented the foundation for analyzing such datasets in two pieces;
the first is the skeleton of the line graph and the second is a distributed scale implementation of an agglomerative 
hierarchical clustering.

The full line graph generally scales to an unusable size for relatively small graphs with as few as millions of edges.
 The bounded nature of the line graph
 skeleton presented here implies that we can now gain new insights about 
 relational datasets even in the absence of a distributed compute architecture. 
 The clustering enforces a sense of causality between records 
 which yields clusters that are unrecognizable by conventional 
 graph clustering algorithms and yet still intuitively interesting. 

The clustering algorithm that we presented is not intrinsically tied to computing clusters specifically for the line graph.
In general, any weighted graph could be clustered using Algorithm \ref{alg:hier_cluster} with large scale graphs additionally
 using the discrete dendrogram presented in \S \ref{sec:distributed}.
 The union of the line graph skeleton and this hierarchical graph clustering provides an often sought after 
 win-win in algorithm development as the clustering only needs the line graph 
 skeleton and that is all we can reasonably compute 
 for most graphs anyway. The potential modularity and flexibility of this approach should generate significant future 
research interest.

\appendix

\section{Equivalent Weighted Connectivity of Line Graph Skeleton} 
\label{sxn:theoryappendix} 
 
In this section, we demonstrate  theory that shows 
Algorithm~\ref{alg:skeleton} will produce a subgraph of the full 
time incremented line graph 
that faithfully represents the weighted connectivity. 
 
Let $\cG_s(\cV, \cE)$ be a static, unweighted graph. 
We say $\cG_s$ is {\em connected}, if for any vertex pair $i,j \in \cV$, there exists at least one connected path of edges from $i$ to $j$ in $\cE$. 
If $\cG_s$ is disconnected, it is decomposable into a set of {\em connected components}, $\comps{\cG_s}$, a set of subgraphs that covers all vertices and edges. 
 
For weighted graphs consider computing connected components given a weight threshold. 
In this work, we use weights to represent distances, i.e. a smaller weight on an edge implies a stronger relationship 
because the connected nodes are closer together. 
Let $\cG(\cV, \cE, \cW)$ be a weighted graph, 
and let $\cG_\omega(\cV, \cE_\omega)$ be the {\em weight-filtered graph} formed by thresholding edge weights 
$$ 
(i,j,w_{ij}) \in \cE_\omega 
\qquad \Longleftrightarrow \qquad 
w_{ij} \leq \omega. 
 $$ 
 
\begin{definition}{\sc (Weight-Filtered Connected Components)} 
The {\em weight-filtered connected components} of a weighted graph $\cG(\cV, \cE, \cW)$ with threshold $\omega$ is 
$$ 
\comps{\cG, \omega} := \comps{\cG_\omega}. 
$$ 
If two weighted graphs $\cG_1$ and $\cG_2$ have the same vertex set and:
$$ 
\comps{\cG_1, \omega} = \comps{\cG_2, \omega} 
$$ 
for all $\omega$, we say their weighted connectivities are {\it equivalent}. 
\end{definition}

\begin{definition}{\sc (MST)}. 
\label{thm:mwst} 
For a weighted graph $\cG(\cV, \cE, \cW)$, a {\em weighted spanning tree} $\cM(\cV,\cE_\cM, \cW_\cM)$ is a connected tree involving all vertices $\cV$. 
The {\em weight} of $\cM$ is the sum of all the weights, $\sum_{(i,j)\in \cE_\cM} w_{ij}$. 
Additionally, if no spanning tree exists with lower weight, then $\cM$ is a {\em minimum weight spanning tree (MST)}. 
 
 \end{definition}

\begin{theorem}{\sc (Weighted MST Connectivity)}. 
\label{thm:mwstconn}  
Let $\cG(\cV,\cE,\cW)$ be a connected and weighted graph 
and $\cM(\cV, \cE_\cM, \cW_\cM)$ be any MST 
of $\cG$.   The weighted connectivities of $\cG$ and $\cM$ are equivalent. 
\end{theorem}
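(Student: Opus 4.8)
The plan is to prove that the two partitions $\comps{\cG,\omega}$ and $\comps{\cM,\omega}$ coincide for every threshold $\omega$ by showing each refines the other; since both are partitions of the common vertex set $\cV$, mutual refinement forces equality. Concretely, I would show that for any two vertices $i,j \in \cV$, they lie in the same block of $\comps{\cG,\omega}$ if and only if they lie in the same block of $\comps{\cM,\omega}$, for each fixed $\omega$.

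One direction is immediate. Because $\cM$ is a spanning tree of $\cG$, its edge set $\cE_\cM$ is a subset of $\cE$ and every retained edge keeps its original weight. Hence the filtered tree $\cM_\omega$ is a subgraph of the filtered graph $\cG_\omega$, so any path in $\cM_\omega$ is also a path in $\cG_\omega$; connectivity in $\cM_\omega$ therefore implies connectivity in $\cG_\omega$. This shows that $\comps{\cM,\omega}$ refines $\comps{\cG,\omega}$.

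For the reverse direction --- the substantive part --- I would first reduce to the case of a single edge. Suppose $i,j$ are connected in $\cG_\omega$ via a path $i = v_0, v_1, \ldots, v_k = j$ whose edges all have weight $\leq \omega$. Since connectivity is transitive, it suffices to show that the endpoints of each edge $(v_{\ell}, v_{\ell+1})$ are connected in $\cM_\omega$. So fix any edge $(a,b,w_{ab}) \in \cE$ with $w_{ab} \leq \omega$, and let $P$ be the unique $a$--$b$ path in the tree $\cM$. The claim I would establish is that every edge on $P$ has weight at most $w_{ab}$, which immediately places $P$ inside $\cM_\omega$ and connects $a$ to $b$ there.

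The key step, and the main obstacle, is proving this claim by an exchange argument against the minimality of $\cM$. Assuming some edge $(u,v)$ on $P$ satisfies $w_{uv} > w_{ab}$, removing $(u,v)$ splits $\cM$ into two subtrees; because $(u,v)$ lies on the tree path joining $a$ and $b$, these two endpoints fall on opposite sides of the induced cut. Re-inserting $(a,b)$ therefore reconnects the two subtrees into a new spanning tree $\cM'$ whose total weight is $\sum_{\cM} w - w_{uv} + w_{ab} < \sum_{\cM} w$, contradicting that $\cM$ is an MST. The care needed here is purely combinatorial: one must verify that $\cM'$ is again a spanning tree (same vertex set, $|\cE_\cM|$ edges, connected) and that the strict inequality $w_{uv} > w_{ab}$ is genuinely used, so that ties in the weights cause no trouble. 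With the claim in hand, the single-edge case follows, transitivity lifts it to arbitrary $\cG_\omega$-paths, and we conclude that $\comps{\cG,\omega}$ refines $\comps{\cM,\omega}$. Combined with the trivial direction, this yields $\comps{\cG,\omega} = \comps{\cM,\omega}$ for all $\omega$, which is exactly the equivalence of weighted connectivities.
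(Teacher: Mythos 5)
Your proposal is correct and uses essentially the same argument as the paper: both reduce to an exchange step in which a tree-path edge of weight exceeding the threshold is removed, a lighter edge crossing the resulting cut is found, and the swap contradicts minimality of $\cM$. The only cosmetic difference is that you first reduce to a single edge $(a,b)$ and prove the standard path-optimality property of MSTs, whereas the paper works directly with a full path $\cP$ in $\cG_\omega$ and extracts the crossing edge from it.
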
 
 
\begin{proof} 
We prove via contradiction. 
Assume there exists a threshold $\omega$ for which $\comps{\cM, \omega} \neq \comps{\cG, \omega}$. 
Because $\cE_\cM \subset \cE$, this implies there is a pair of vertices $r,s$ that are connected in a path $\cP$ within $\cE$ whose edge weights are all less than or equal to $\omega$, 
but no such path exists in $\cE_\cM$.   

Let $\cP_0$ be the unique path from $r$ to $s$ within tree $\cE_{\cM}$, which contains at least one edge $(i,j)$ with weight above the threshold, $w_{ij} \geq w$.
Remove $(i,j)$ from $\cM$ to yield two subtrees, $\cM_i$ and $\cM_j$, where 
$\cV_i$ and $\cV_j$ are the disjoint 
associated vertex subsets connected by these trees and 
$\cE_i$ and $\cE_j$ are the associated edge subsets. 
Note that $\cV_i \cup \cV_j = \cV$ and no edge exists between 
$\cV_i$ and $\cV_j$.   
Without loss of generality let $i,r \in \cV_r$ and $j,s \in \cV_s$.

The existence of $\cP \in \cE$ shows there must exist an edge 
$(i^\star, j^\star)$ with $w_{i^\star, j^\star} \leq \omega$, 
$i^\star \in \cV_i$, 
and  
$j^\star \in \cV_j$. 
Letting $\cE_{\cM^\star} = \{\cE_{\cM} \setminus (i,j)\} \cup \{(i^\star, j^\star)\}$, yields a spanning tree with lower weight
than $\cM$ and the contradiction has been realized.

\end{proof} 

We show Kruskal's Algorithm \cite{kruskal1956}  
applied to an explicitly formed local line graph will give the same MST as the 
implicitly constructed tree in Algorithm~\ref{alg:skeleton}, where no explicit line graph is ever built.
For a general weighted graph, there is no guarantee for an MST to be unique, so 
we perturb the edge weights by a sufficiently small amount so each (line graph) 
edge weight is unique, and the approach we take also gives us a deterministic (line graph) 
vertex ordering.   

\begin{theorem}{\sc (Causal time increment MST)}.
\label{thm:1dmwstd}
Let $\cL_\cG(i) = (\cV,\cE,\cW)$ be a node-local line graph representing a local portion of a line graph, 
from Definition \ref{def:nliwlg}. 
The weights are defined as time increments: 
\begin{itemize}
\item[(i)] every $p \in \cV$ has an associated value $t_p \in \mathbb{R}$, 
\item[(ii)] $(p,q) \in \cE$ only if $t_p < t_q$,
and 
\item[(iii)] for every $(p,q) \in \cE$, we have $w_{pq} = t_q - t_p \geq 0$.
\end{itemize}
The $\cE_\cM$ defined by Algorithm~\ref{alg:skeleton} 
yields a tree $\cM(\cV, \cE_\cM, \cW)$ that is an MST of $\cL_\cG(i)$.
\end{theorem}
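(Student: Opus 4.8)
The plan is to prove the statement in two stages: first that the edge set $\cE_\cM$ returned by Algorithm~\ref{alg:nlmst} (the per-node routine invoked inside Algorithm~\ref{alg:skeleton}) is a spanning tree of $\cL_\cG(i)$, and then that this spanning tree satisfies the \emph{cycle optimality certificate} characterizing a minimum spanning tree. Throughout I use the bipartite structure of $\cL_\cG(i)$ coming from Definition~\ref{def:nliwlg}: its vertices are the incident records $\cE_i = \cE^{(in)}_i \cup \cE^{(out)}_i$, an edge joins an in-record $b$ to an out-record $c$ only when $t_b \le t_c$ (tip-to-tail plus causality), and its weight is the gap $t_c - t_b \ge 0$. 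Since the theorem assumes strict time ordering (property (ii)) and the preceding remark lets us perturb so that all weights are distinct, I may assume a unique MST and a deterministic vertex order; a disconnected $\cL_\cG(i)$ is handled one component at a time.

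First I would show Algorithm~\ref{alg:nlmst} emits a spanning tree. Tracking the working list $W$, when the sweep reaches an out-record $c_m$ it emits one edge to each in-record currently in $W$, namely the in-records arriving since the previous out-record together with the single carried-over record $W[-1]$. A telescoping count of $\sum_m |W|$ over out-records then gives exactly $|\cV|-1$ emitted edges per connected component. For connectivity I would observe two facts: every in-record is joined to the first out-record following it (it remains in $W$ until that out-record is processed), and consecutive out-records $c_{m-1}, c_m$ are linked through the persisted carry-over $W[-1]$, which is emitted against both. Hence all out-records lie in one component to which every in-record attaches, and a connected subgraph on $|\cV|$ vertices with $|\cV|-1$ edges is a spanning tree.

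Next I would certify minimality through the cycle property: it suffices to show that every non-tree edge $(b,c)$ is the strict maximum-weight edge on the fundamental cycle it closes in $\cE_\cM$. The heart of the argument is a containment lemma: every vertex on the tree path from $b$ to $c$ carries a timestamp in $[t_b, t_c]$. To see this, let $c_j$ be the first out-record after $b$; because $(b,c)$ is a non-tree edge there is an out-record strictly between $b$ and $c$, so $t_b < t_{c_j} < t_c$, and the path leaves $b$ along the tree edge $(b,c_j)$ and then threads the out-records $c_j, \dots, c_m = c$ via their carried-over in-records. Each such carry-over equals the most recent in-record preceding its out-record; since $b$ is itself an in-record preceding every $c_l$ with $l \ge j$, that most recent in-record has timestamp $\ge t_b$, while it is of course $< t_{c_l} \le t_c$. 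Thus all path vertices lie in $[t_b,t_c]$, every path edge has time-gap strictly below $t_c - t_b = w_{bc}$ (strictness using distinct timestamps), and $(b,c)$ is the unique heaviest edge on its cycle. As this holds for all non-tree edges, $\cE_\cM$ is the unique MST, which in particular is the tree Kruskal's algorithm would return.

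I expect the containment lemma to be the main obstacle, and specifically the bookkeeping around the carried-over record $W[-1]$. The subtle cases are runs of consecutive out-records with no intervening in-records, where a single in-record persists in $W$ and serves as the carry-over for several out-records (so it has tree-degree larger than one and the fundamental path visits out-records non-consecutively), and in-records that themselves act as carry-overs. I would handle these by first proving the clean invariant that, at every out-record $c_l$, the value $W[-1]$ is exactly the most recent in-record occurring before $c_l$; the containment bound then follows uniformly. The remaining loose ends are routine: the edge-count accounting, the tie-breaking/perturbation that guarantees uniqueness and matches the algorithm's deterministic ordering, and restricting to a single connected component of $\cL_\cG(i)$ so that \emph{spanning tree} and \emph{MST} are well posed (isolated in-records with no later out-record, or out-records with no earlier in-record, are simply absent from any component).
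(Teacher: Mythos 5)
Your proposal is correct, but it takes a genuinely different route from the paper. The paper proves the theorem by explicitly simulating Kruskal's algorithm on the node-local line graph: it arranges the (perturbed) weights into a bipartite matrix $B$ indexed by time-sorted in-records (rows) and out-records (columns), establishes the monotonicity relations $B_{pq} > B_{p,q-1}$ and $B_{pq} > B_{p+1,q}$, and then argues case-by-case that at every iteration the edge Kruskal's selects is exactly one of the edges emitted by Algorithm~\ref{alg:nlmst} (the boundary entries of the non-infinite region of $B$, plus the ``corner'' entries with $B_{p+1,q-1}=\infty$), concluding that the two procedures build the identical tree. You instead verify the MST property intrinsically: first a direct spanning-tree check (edge count $|\cV|-1$ via the telescoping sum over the working list $W$, plus connectivity through the carried-over record), and then the cycle optimality certificate, reduced to the containment lemma that every vertex on the tree path between the endpoints of a non-tree edge $(b,c)$ has timestamp in $[t_b,t_c]$, so every path edge is strictly lighter than $w_{bc}$. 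Your containment lemma is sound: the path from $b$ runs to $b$'s latest tree neighbor among the out-records and then along the backbone of out-records linked by carry-overs, each carry-over being the most recent in-record before its out-record and hence no earlier than $b$; you correctly flag the one subtlety (a single in-record serving as carry-over for several consecutive out-records, so the fundamental path skips backbone vertices), and the $W[-1]$ invariant you propose resolves it. What each approach buys: the paper's simulation additionally certifies that Kruskal's run on the explicitly formed local line graph returns the \emph{same} tree edge-for-edge (with the deterministic perturbation ordering), which is the statement it actually wants downstream; your argument is more self-contained and avoids reasoning about Kruskal's iteration order, and its non-strict form ($\leq$ on cycle edges) establishes that the output is \emph{an} MST even without the perturbation, with the perturbation needed only for uniqueness. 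Both proofs share the tie-breaking device and the component-by-component reduction for disconnected $\cL_\cG(i)$.
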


\begin{proof}
We break ties in edge weights in a specific way by perturbing edge weights 
by a small enough amount that ensures the edges of any MST of the perturbed graph is also
an MST for the original.   Let $\gamma = w_{min} / (2 |\cE_i|^2) $, where $w_{min}$ is the smallest
weight in $\cL_\cG(i)$.   For each incoming edge $e_p = (i_p, i, t_p)$, 
replace it with $e'_p = (i_p, i, t'_p)$ and for each outgoing edge $e_q = (i, j_q, t_q)$, 
replace it with $e'_q = (i, j_q, t'_q)$, where 

\begin{eqnarray*}
t'_p & = & t_p + \gamma^2 p \\
t'_q & = & t_q + \gamma q. 
\end{eqnarray*}
The choice of perturbing by small values of $\gamma$ and $\gamma^2$ allows the edge weights
to be unique, positive, and represent a deterministic vertex sorting that allows the two 
algorithms to compute the exact same MST. 

Define the weighted rectangular matrix $B$ whose rows are the incoming edges 
(line graph vertices) incident to $i$.   Respectively, columns of $B$ are the outgoing edges 
(line graph vertices) incident to vertex $i$.  Sort the rows and columns by $t'_p$.  
Entries of $B$ are the weights, 

$$
B_{pq} = \left\{
\begin{array}{rcl}
t'_q - t'_p && t'_q \geq t'_p \\
+\infty && t'_q < t'_p   
\end{array}
\right\},
$$

where non-edges are chosen to be infinity (in the context of computing an MST).   The 
non-infinite entries of $B$ are 

$$
B_{pq} = t_q - t_p + \gamma q - \gamma^2 p.
$$

Kruskal's algorithm is greedy.   It starts with an empty set and iteratively 
adds the smallest available 
edge weight that does not introduce a cycle to the set, building up the MST.
For a general (line graph) edge $(p,q)$ with weight $B_{pq}$, it is easy to show the following relations

\begin{eqnarray*}
B_{pq} > B_{p,q-1} && \mbox{ if } \, B_{p,q-1} \neq +\infty,  \\
B_{pq} > B_{p+1,q} &&  \mbox{ if } \, B_{p+1,q} \neq +\infty.  \\
\end{eqnarray*}

These show that the first edge selected by Kruskal's Algorithm must be one of the following cases: 

(i) $p=nrows(B)$ and $q=1$, 

(ii)  $p=nrows(B)$ and $B_{p,q-1}=\infty$, 

(iii) $B_{p+1,q} = \infty$ and $q=1$, or

(iv) $B_{p,q-1} = B_{p+1,q} = \infty$.

In each of the cases (i)-(iv), the edge $(p,q)$ would also be selected by Algorithm~\ref{alg:skeleton}  
(e.g. an edge associated with one of the orange highlighted ``X" entries in Figure~\ref{fig:skeleton}).

\begin{figure}[h]
\begin{center}
\includegraphics[width=3.4 in]{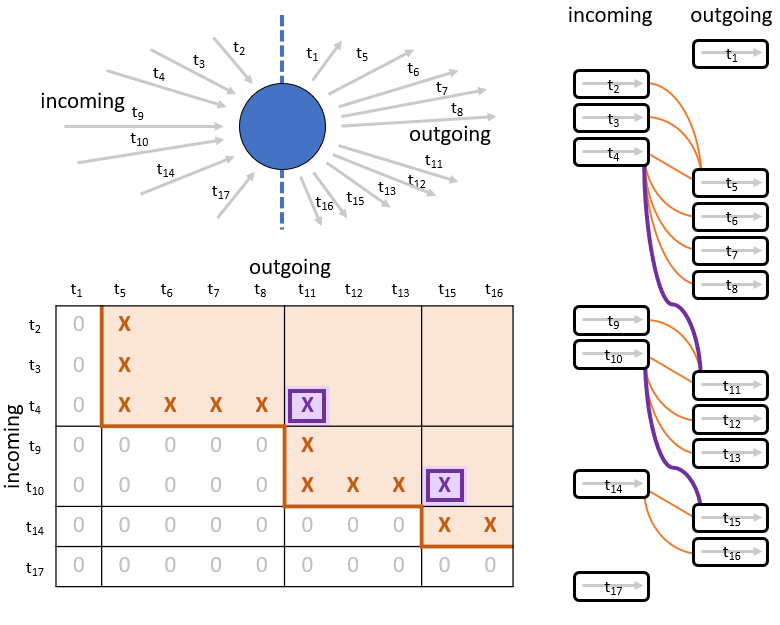} 
\caption{
{\sc Node Local Construction of Line Graph Skeleton.}
\\\hspace{\textwidth}
(Top-Left) A example of a single vertex that has 
several incoming and outgoing edges with 
observation times $t_p$.
\\\hspace{\textwidth}
(Right) Sorting edges by time (top to bottom) and performing 
Alg.~\ref{alg:nlmst} 
yields a tree between the incoming and outgoing edges. 
\\\hspace{\textwidth}
(Bottom-Left) A matrix representation of the 
$\cL_\cG(i)$, with incoming and outgoing edges represented by  
 rows and columns respectively. The orange shaded 
entries are all line graph edges that exist in the $\cL_\cG$ 
(which is never explicitly formed in our approach). 
The ``X's'' represent 
edges added to the line graph skeleton $\cL^*_\cG$ during 
Algorithm~\ref{alg:nlmst} and correspond 
to same color edges in the graph representation on the right.
}
\label{fig:skeleton}
\end{center}
\end{figure}

In subsequent iterations of Kruskal's Algorithm, the selected edge will have one of:

(i)-(iv) from above, 

(ii') $p=nrows(B)$ and $(p,q-1)$ was previously selected, 

(iii') $(p+1,q)$ was previously selected and $q=1$, or 

(iv') $(p+1,q)$ was previously selected and $(p,q-1)$ was previously selected.

For cases (i)-(iv), (ii') and (iii'), $(p,q)$ would also selected by Algorithm~\ref{alg:skeleton}.

Case (iv') is more complex, 
such edges that do not induce a cycle would be selected by Kruskal's.   
We see that the only possibility is that edges for which $B_{p+1,q-1} = \infty$ 
(e.g. the purple highlighted ``X" entries in Figure~\ref{fig:skeleton}) 
are the only possibility. 
Such edges are explicitly chosen in Algorithm~\ref{alg:skeleton}.

To see the alternative is not possible, 
if  $B_{p+1,q-1} < \infty$, then
$B_{p,q} > B_{p,q-1} > B_{p+1,q-1}$ and edge $(p+1, q-1)$ would have been also selected
3 or more iterations earlier.   The four edges $(p,q), (p+1,q), (p+1,q-1), (p,q-1)$ represent a 
four cycle and $(p,q)$ would not be selected by Kruskal's algorithm.   
Such edges are also not selected in Algorithm~\ref{alg:skeleton}.

Repeating this shows Kruskal's algorithm builds up the same tree
as Algorithm~\ref{alg:skeleton}, implying the output is an MST of 
the graph associated with the perturbed $B$.   
This is also MST of $\cL_\cG(i)$, by construction 
(we chose the perturbation of edge weights to ensure this property).

\end{proof}

\begin{theorem}{\sc (Global Connectivity)}
\label{thm:lasttheorem}
Let $\cG(\cV,\cE,\cT)$ be a dynamic multigraph and let $\cL_\cG$ be the
associated time-increment weighted line graph. 

Define the skeleton 

$$
\cL^*_\cG := \bigcup_{i \in \cV} \cM_\cG(i)
$$

as produced by Algorithm~\ref{alg:skeleton}.   
The global weighted connectivities of $\cL_\cG$ and $\cL^*_\cG$ are equivalent. 
\end{theorem}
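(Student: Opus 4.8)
The plan is to reduce this global statement to the local result already in hand, exploiting the fact that both weight-filtering and connected-component formation interact cleanly with graph unions. First I would fix an arbitrary threshold $\omega$ and recall from Definition~\ref{def:nliwlg} that $\cL_\cG = \bigcup_{i\in\cV}\cL_\cG(i)$, while by construction $\cL^*_\cG=\bigcup_{i\in\cV}\cM_\cG(i)$. Throughout I regard all of these graphs as living on the common vertex set $\cE$: any line-graph vertex not incident to $i$ simply appears as an isolated vertex in both $\cL_\cG(i)$ and $\cM_\cG(i)$. The first easy observation is that thresholding commutes with union, $(\bigcup_i H_i)_\omega = \bigcup_i (H_i)_\omega$, since an edge survives the filter according to its own weight regardless of which summand it came from. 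Hence $\comps{\cL_\cG,\omega}=\comps{\bigcup_i (\cL_\cG(i))_\omega}$ and likewise for $\cL^*_\cG$.

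Next I would invoke the local result. By Theorem~\ref{thm:1dmwstd}, $\cM_\cG(i)$ is a minimum spanning tree of $\cL_\cG(i)$; applying Theorem~\ref{thm:mwstconn} to each connected component of $\cL_\cG(i)$ separately (so that the possibly-disconnected case is covered and isolated vertices are matched) gives $\comps{\cL_\cG(i),\omega}=\comps{\cM_\cG(i),\omega}$ for every $i$ and every $\omega$. Restated as equivalence relations on $\cE$, the ``same $\omega$-component'' relation $R_i$ induced by $\cL_\cG(i)$ coincides with the one induced by $\cM_\cG(i)$: vertices outside $\cE_i$ form singleton classes in both graphs, and vertices in $\cE_i$ are partitioned identically by the local equivalence.

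The key structural lemma, and the step I expect to be the main obstacle, is that the $\omega$-connectivity of a union of graphs on a shared vertex set is exactly the transitive closure of the union of the individual $\omega$-connectivity relations. I would prove this by decomposing any weight-$\le\omega$ path in $\bigcup_i(H_i)_\omega$ into maximal runs of consecutive edges drawn from a single summand: each run witnesses that its endpoints lie in a common $\omega$-component of that $H_i$, so the path's endpoints are related by the transitive closure of $\bigcup_i R_i$; conversely each $R_i$-relation is realized by a path inside $(H_i)_\omega\subseteq(\bigcup_i H_i)_\omega$, so any element of the transitive closure is achievable by concatenation. The subtle point making the single-summand decomposition well-defined is that each line-graph edge genuinely belongs to one node-local piece, since a tip-to-tail pair shares exactly one pivot vertex of $\cG$.

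Finally I would combine these pieces. Since $R_i$ agrees between $\cL_\cG(i)$ and $\cM_\cG(i)$ for every $i$, the unions $\bigcup_i R_i$ agree, and so do their transitive closures. By the lemma these transitive closures are precisely the $\omega$-component relations of $\cL_\cG$ and of $\cL^*_\cG$, giving $\comps{\cL_\cG,\omega}=\comps{\cL^*_\cG,\omega}$. As $\omega$ was arbitrary, the global weighted connectivities are equivalent, which is the claim.
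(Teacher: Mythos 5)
Your proposal is correct and follows essentially the same route as the paper: both reduce to the node-local decomposition $\cL_\cG=\bigcup_{i}\cL_\cG(i)$, invoke Theorems~\ref{thm:mwstconn} and \ref{thm:1dmwstd} to replace each line-graph edge of a weight-bounded path by a weight-bounded path inside the corresponding $\cM_\cG(i)$, and splice the results together. Your transitive-closure lemma is just a cleaner formalization of the paper's path-concatenation step, though you are more careful about the reverse inclusion (immediate from $\cL^*_\cG\subseteq\cL_\cG$), about applying Theorem~\ref{thm:mwstconn} componentwise when $\cL_\cG(i)$ is disconnected, and about isolated vertices, all of which the paper leaves implicit.
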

\begin{proof}
For a given $\omega$, 
let $\cP$ be a path in $\cL_\cG$ between any source/target vertex pair in
$\cL_\cG$ such that all edge weights are less than or equal to $\omega$.
By Theorems~\ref{thm:mwstconn} and \ref{thm:1dmwstd}, 
each edge $(e_p:=(i_p, i, t_p),e_q:=(i, j_q, t_q), w_{pq}:= |t_p-t_q|) \in \cP$
yields a path $\cP_i$ in $\cM_\cG(i)$ 
such that all weights in $\cP_i$ are also less than $\omega$.   
The union of all $|\cP|$ paths yields a path $\cP^*$ in
$\cL^*_\cG$ 
connecting the same source/target vertex pair 
with edges all weight $\omega$ or less.  Note these constituent paths 
could stem from the same center vertex $i$ and associated 
bipartite graph $\cM_\cG(i)$ multiple times 
(and often would in real-world cases).

\end{proof}

\bibliographystyle{comnet} 
\bibliography{lineg}

\end{document}